\def \unit {\mathbb{1}}
\def \zero {\mathbb{O}}
\newcommand{\sskip}{\mathbf{skip}}
\newcommand{\iif}{\mathbf{if}}
\newcommand{\tthen}{\mathbf{then}}
\newcommand{\eelse}{\mathbf{else}}
\newcommand{\ffi}{\mathbf{fi}}
\newcommand{\wwhile}{\mathbf{while}}
\newcommand{\ddo}{\mathbf{do}}
\newcommand{\ood}{\mathbf{od}}
\newcommand{\eemp}{\mathbf{emp}}
\newcommand{\triple}[3]{\{#1\}\ #2\ \{#3\}}
\newcommand{\ifstat}[3] {\iif \ #1 \ \tthen \ #2 \ \eelse \ #3 \ \ffi}
\newcommand{\whileloop}[2] {\wwhile \ #1 \ \ddo \ #2 \ \ood}
\newcommand{\predT}[1]{[#1]}
\newcommand{\wand}{-\hspace{-0.42em}\ast\ }
\newcommand{\emptylist}{[~]}
\newcommand{\mapslist}{\ [\mapsto]\ }
\newcommand{\spec}[2]{{\isasymlbrakk}#1, #2{\isasymrbrakk}}
\title {Principles for Verification Tools: Separation Logic}
\author {
  	Brijesh Dongol
	\and
	Victor B. F. Gomes
	\and 
	Georg Struth 
}
\institute {
	Department of Computer Science, University of Sheffield \\
	\email{\{b.dongol,v.gomes,g.struth\}@shefield.ac.uk}
}
\begin{document}

\maketitle

\begin{abstract}
  A principled approach to the design of program verification and
  construction tools is applied to separation logic. The control flow
  is modelled by power series with convolution as separating
  conjunction. A generic construction lifts resource monoids to
  assertion and predicate transformer quantales. The data flow is
  captured by concrete store/heap models. These are linked to the
  separation algebra by soundness proofs. Verification conditions and
  transformation laws are derived by equational reasoning within the
  predicate transformer quantale.  This separation of concerns makes
  an implementation in the Isabelle/HOL proof assistant simple and
  highly automatic. The resulting tool is correct by construction; it
  is explained on the classical linked list reversal example.

\end{abstract}


\section{Introduction}\label{S:intro}

Separation logic yields an approach to program verification that has
received considerable attention over the last decade. It is designed
for local reasoning about a system's states or resources, allowing one
to isolate the part of a system that an action affects from the
remainder, which is unaffected. This capability is provided by the
idiosyncratic separating conjunction operator together with the frame
inference rule, which makes local reasoning modular.  A key
application is program verification with pointer data
structures~\cite{Reynolds02, OHRY01}; but the method has also been
used for modular reasoning about concurrent
programs~\cite{OHearn07,Vafeiadis} or fractional
permissions~\cite{BornatCOP05}.

Separation logic is currently supported by a large number of tools; so
large that listing them is beyond the scope of this
paper. Implementations in higher-order interactive proof
assistants~\cite{Weber04,Tuerk,ChlipalaMMSW09,NICTA} are particularly
relevant to this article.  In comparison to automated tools or tools
for decidable fragments, they can express more program properties, but
are less effective for proof search. Ultimately, an integration seems
desirable.

This article adds to this tool chain (and presents another
implementation in the Isabelle/HOL theorem proving
environment~\cite{NPW02}).  However, our approach is different in
several respects. It focusses almost entirely on making the control
flow layer as simple as possible and on separating it cleanly from the
data flow layer.  This supports the integration of various data flow
models and modular reasoning about these two layers, with assignment
laws providing and interface.

To achieve this separation of concerns, we develop a novel algebraic
approach to separation logic, which aims to combine the simplicity of
original logical approaches~\cite{OHRY01} with the abstractness and
elegance of O'Hearn and Pym's categorical logic of bunched
implications~\cite{OHearnP99} in a way suitable for formalisation in
Isabelle. Our approach is based on \emph{power series}~\cite{DHS14},
as they have found applications in formal language and automata
theory~\cite{BerstelReutenauer,Handbook}. Their use in the context of
separation logic is  a contribution in itself.

In a nutshell, a \emph{power series} is a function $f:M\to Q$ from a
partial monoid $M$ into a quantale $Q$. Defining addition of power
series by lifting addition pointwise from the quantale, and
multiplication as \emph{convolution}
\begin{equation*}
  (f \otimes g)\ x = \sum_{x=y\circ z} f\ y \odot g\ z,
\end{equation*}
it turns out that the function space $Q^M$ of power series forms
itself a quantale~\cite{DHS14}. In the particular case that $M$ is
commutative (a \emph{resource monoid}) and $Q$ formed by the booleans
$\mathbb{B}$ (with $\odot$ as meet), one can interpret power series
as assertions or predicates over $M$.  Separating conjunction then
arises as a special case of convolution, and, in fact, as a language
product over resources. The function space $\mathbb{B}^M$ is the
assertion quantale of separation logic. The approach generalises to
power series over program states modelled by store-heap pairs. This
generalisation is needed for applications.

Using lifting results for power series again, we construct the
quantale-like algebraic semantics of predicate transformers over
assertion quantales. We characterise the monotone predicate
transformers and derive the inference rules of Hoare logic (without
assignment) in generic fashion within this subalgebra. Furthermore, we
derive the frame rule of separation logic on the subalgebra of local
monotone predicate transformers.  We use these rules for automated
verification condition generation with our tool. Formalising Morgan's
specification statement~\cite{Mor98} on the quantale structures, we
obtain tools for program construction and refinement with a frame law
with minimal effort. The predicate transformer semantics for
separation logic, instead of the usual state transformer
ones~\cite{COY07}, fits well into the power series approach and
simplifies the development.

The formalisation of the algebraic hierarchy from resource monoids to
predicate transformer algebras benefits from Isabelle's excellent
libraries for functions and, in particular, its integration of
automated theorem provers and SMT-solvers via the Sledgehammer tool.
These are optimised for equational reasoning and make the entire
development mostly automatic. In addition, Isabelle's reconstruction
of proof outputs provided by the external tools makes our verification
tools correct by construction.

At the data flow level, we currently use Isabelle's extant
infrastructure for the store, the heap and pointer-based data
structures. An interface to the control flow algebra is provided by
the assignment and mutation laws of separation logic and their
refinement counterparts. Isabelle's data flow models are linked
formally with our abstract separation algebra by soundness
proofs. Algebraic facts are then picked up automatically by Isabelle
for reasoning in the concrete model. The verification examples in the
last section of this article show that, at this layer, proofs may
require some user interaction, but a Sledgehammer-style integration of
optimised efficient provers and solvers for the data flow is an avenue
of future work.

The entire technical development has been formalised in Isabelle; all
proofs have been formally verified.  For this reason we show only some
example proofs which demonstrate the simplicity of algebraic
reasoning. The complete executable Isabelle theories can be found
online\footnote{https://github.com/vborgesfer/sep-logic}.


\section{Partial Monoids, Quantales and Power
  Series} \label{S:preliminaries}

This section presents the algebraic structures that underlie our
approach to separation logic. Further details on power series and
lifting constructions can be found in~\cite{DHS14}.

A \emph{partial monoid} is a structure $(M,\cdot,1,\bot)$ such that
$(M,\cdot,1)$ is a monoid and $x\cdot \bot =\bot = \bot \cdot x$ holds
for all $x \in M\cup\{1,\bot\}$.  We often do not mention $\bot$ in
definitions. A partial monoid $M$ is \emph{commutative} if $x\cdot
y=y\cdot x$ for all $x,y\in M$. Henceforth $\cdot$ is used for a
general and $\ast$ for a commutative multiplication.

A \emph{quantale} is a structure $(Q,\le,\cdot,1)$ such that $(Q,\le)$
is a complete lattice, $(Q,\cdot)$ is a monoid and the distributivity
axioms
\begin{equation*}
	(\sum_{i\in I} x_i)\cdot y  = \sum_{i\in I} x_i\cdot y,\qquad\qquad 
   x\cdot (\sum_{i\in I}y_i)  =  \sum_{i\in I} (x\cdot y_i)
\end{equation*}
hold, where $\sum X$ denotes the supremum of a set $X\subseteq
Q$. Similarly, we write $\prod X$ for the infimum of $X$, and $0$ for
the least and $U$ for the greatest element of the lattice.  The
monotonicity laws $ x \le y \Rightarrow z\cdot x \le z\cdot y$ and
$x\le y \Rightarrow x\cdot z \le y\cdot z$ follow from distributivity.
A quantale is \emph{commutative} and \emph{partial} if the underlying
monoid is. It is \emph{distributive} if $x\sqcap (\sum_{i\in I} y_i)
=\sum_{i\in I} (x\sqcap y_i)$ and $x+ (\prod_{i\in I} y_i) =
\prod_{i\in I} (x+ y_i)$. In that case, the annihilation laws $x \cdot
0 = 0 =0 \cdot x$ follow from $\sum_{i\in\emptyset} x_i = \sum
\emptyset = 0$.  A \emph{boolean quantale} is a complemented
distributive quantale. The boolean quantale $\mathbb{B}$ of the
booleans, where multiplication coincides with join, is an important
example.

A \emph{power series} is a function $f:M\to Q$, from a
partial monoid $M$ into a quantale $Q$.  For $f,g:M\to Q$ and a family of functions
$f_i:M\to Q$, $i\in I$ we define
 \begin{equation*}
(f\cdot g)\ x =  \sum_{x=y\cdot z} f\ y\cdot g\ z,\qquad (\sum_{i\in I} f_i)\ x  = \sum_{i\in I} f_i\ x.
 \end{equation*}
 The composition $f\cdot g$ is called \emph{convolution}; the
 multiplication symbol is overloaded to be used on $M$, $Q$ and the
 function space $Q^M$.  The idea behind convolution is simple: element
 $x$ is split into $y$ and $z$, $f\ y$ and $g\ z$ are calculated in
 parallel, and their results are composed to form a value for the
 summation with respect to all possible splits of $x$. Because $x$
 ranges over $M$, the constant $\bot\notin M$ is excluded as a
 value. Hence undefined splittings of $x$ do not contribute to
 convolutions. In addition, $(f+g)\ x = f\ x + g\ x$ arises as a
 special case of the supremum.  Finally, we define the power series
 $\zero:M\to Q$ as $\zero = \lambda x.\ 0$ and $\unit:M\to Q$ as
 $\unit = \lambda x.\ \textbf{if}\ (x = 1)\ \textbf{then}\ 1\
 \textbf{else}\ 0$.

 The quantale structure lifts from $Q$ to the function space $Q^M$ of
 power series.
\begin{theorem}[\cite{DHS14}]\label{thm:quantale-lifting} 
  Let $M$ be a partial monoid. If $Q$ is a boolean quantale, then so
  is $(Q^M,\le,\cdot,\unit)$. If $M$ and $Q$ are commutative, then so is $Q^M$.
\end{theorem}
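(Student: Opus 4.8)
The plan is to check each clause of the definition of boolean quantale for $(Q^M,\le,\cdot,\unit)$, reducing every lattice-theoretic property to the corresponding one in $Q$ by pointwise reasoning, so that convolution is the only genuinely new ingredient to analyse. First I would observe that $(Q^M,\le)$ is a complete lattice because the order is pointwise and $Q$ is complete: $\sum_{i\in I} f_i$ and $\prod_{i\in I} f_i$ are formed argumentwise (as already stipulated for suprema), with bottom $\lambda x.\ 0$ and top $\lambda x.\ U$. The distributive-lattice laws $f\sqcap(\sum_i g_i)=\sum_i(f\sqcap g_i)$ and $f+(\prod_i g_i)=\prod_i(f+g_i)$ hold because they hold argumentwise in $Q$, and the pointwise complement $\lambda x.\ \overline{f\ x}$ witnesses complementation. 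Hence the whole Boolean-lattice part of the statement transfers from $Q$ without touching convolution, and the commutative case is vacuous here.

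Next, distributivity of convolution over arbitrary suprema: unfolding, $((\sum_i f_i)\cdot g)\ x=\sum_{x=y\cdot z}(\sum_i f_i\ y)\cdot g\ z$; pushing the product inside by the distributivity of $Q$ and then swapping the two suprema (valid in any complete lattice) gives $\sum_i\sum_{x=y\cdot z}f_i\ y\cdot g\ z=(\sum_i(f_i\cdot g))\ x$, and symmetrically on the right. The annihilation laws $\zero\cdot f=\zero=f\cdot\zero$ fall out as the empty-index instance, using that $Q$, being distributive, satisfies $0\cdot a=0=a\cdot 0$.

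Then the monoid laws. For the unit, in $(\unit\cdot f)\ x=\sum_{x=y\cdot z}\unit\ y\cdot f\ z$ every summand with $y\neq 1$ equals $0\cdot f\ z=0$ and drops, while $y=1$ forces $z=x$ via $1\cdot x=x$, leaving $1\cdot f\ x=f\ x$; so $\unit\cdot f=f$, and $f\cdot\unit=f$ dually. Associativity is the step needing care: expanding $((f\cdot g)\cdot h)\ x$, distributing $h\ z$ over the inner supremum, and reassociating the $Q$-products, one arrives at $\sum\{f\ s\cdot(g\ t\cdot h\ v)\mid (s\cdot t)\cdot v=x\}$, whereas the same computation for $(f\cdot(g\cdot h))\ x$ yields $\sum\{f\ s\cdot(g\ t\cdot h\ v)\mid s\cdot(t\cdot v)=x\}$; these two index sets coincide by associativity in $M$. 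The only subtlety — and what I expect to be the main obstacle — is the partial-monoid bookkeeping: a split of $x$ routed through an undefined ($\bot$-valued) intermediate product never occurs, because $\bot$ absorbs and $x\in M$ so $x\neq\bot$, so one may re-associate the splittings freely without tracking definedness.

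Finally, for the commutative case write the $M$-multiplication as $\ast$; then $(y,z)\mapsto(z,y)$ is a bijection between $\{(y,z)\mid x=y\ast z\}$ and $\{(z,y)\mid x=z\ast y\}$ — indeed the same set, since $y\ast z=z\ast y$ — and the $Q$-products commute, so $(f\cdot g)\ x=\sum_{x=y\ast z}f\ y\cdot g\ z=\sum_{x=z\ast y}g\ z\cdot f\ y=(g\cdot f)\ x$. Everything beyond the associativity bookkeeping is routine pointwise transfer from $Q$.
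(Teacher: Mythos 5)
Your proof is correct: the pointwise transfer of the Boolean lattice structure, the sup-interchange argument for distributivity of convolution, the unit computation via annihilation, and the reindexing of triple splittings for associativity (with the observation that $x\neq\bot$ rules out undefined intermediate products) are exactly the standard lifting argument. The paper itself states this theorem by citation to~\cite{DHS14} and gives no proof, so there is nothing in-paper to diverge from; your argument is the expected one and fills that gap correctly, including the genuinely delicate point (definedness bookkeeping in the associativity step).
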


The power series approach generalises from one to $n$
dimensions~\cite{DHS14}. For separation logic, the
two-dimensional case with power series $f:S \times M\to Q$ from 
set $S$ and partial commutative monoid $M$ into the commutative
quantale $Q$ is needed.  Now
\begin{equation*}
(f \ast g)\ (x,y) = \sum_{y=y_1\ast y_2} f\ (x,y_1)\ast g\ (x,y_2),\qquad
  (\sum_{i\in I} f_i)\ (x,y)  = \sum_{i\in I} f_i\ (x,y).
\end{equation*}
The convolution $f\ast g$ acts solely on the second
coordinate. Finally, we define two-dimensional units as $\zero =
\lambda x,y.\ 0$ and $\unit = \lambda x,y.\ \textbf{if}\ (y = 1)\
\textbf{then}\ 1\ \textbf{else}\ 0$.
\begin{theorem}[\cite{DHS14}]\label{thm:sh-lifting}
  Let $S$ be a set and $M$ a partial commutative monoid. If $Q$ is a
  commutative boolean quantale, then so is $Q^{S\times M}$.
\end{theorem}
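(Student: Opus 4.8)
The plan is to deduce Theorem~\ref{thm:sh-lifting} from Theorem~\ref{thm:quantale-lifting} by currying away the set $S$. Because the convolution $\ast$ on $Q^{S\times M}$ acts only on the second coordinate, $S$ plays the role of a bare index set, and $Q^{S\times M}$ should be isomorphic to the $S$-fold product of the one-dimensional power series quantale $Q^M$. So the first step is to introduce the currying bijection
\[
  \Phi\colon Q^{S\times M}\longrightarrow \bigl(Q^M\bigr)^S,\qquad \Phi(f)\ x = \lambda y.\ f\ (x,y),
\]
with inverse $\Phi^{-1}(g)\ (x,y) = (g\ x)\ y$, and to check that $\Phi$ transports the quantale structure on $Q^{S\times M}$ to the pointwise product structure on $(Q^M)^S$ assembled from the single quantale $Q^M$.

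Second, I would verify preservation of structure clause by clause; everything is immediate except the multiplication. Order and arbitrary suprema are pointwise on both sides, so $\Phi$ and $\Phi^{-1}$ are monotone and preserve all sups (hence all infima and the bounds $\zero$ and $U$); complements are likewise taken pointwise in $Q$ on both sides, so $\Phi$ preserves them. The one calculation worth recording is that $\Phi$ carries the two-dimensional convolution to the coordinatewise one-dimensional one: for all $x\in S$, $y\in M$,
\[
  \bigl(\Phi(f\ast g)\ x\bigr)\ y = \sum_{y = y_1\ast y_2} f\ (x,y_1)\ast g\ (x,y_2) = \sum_{y = y_1\ast y_2}\bigl(\Phi(f)\ x\bigr)\,y_1\ast\bigl(\Phi(g)\ x\bigr)\,y_2 = \bigl((\Phi(f)\ x)\ast(\Phi(g)\ x)\bigr)\,y,
\]
where the last $\ast$ is convolution in $Q^M$; and, similarly, $\Phi(\unit)$ is the constant family sending each $x$ to the unit $\lambda y.\ \iif\ y = 1\ \tthen\ 1\ \eelse\ 0$ of $Q^M$. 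Hence $\Phi$ is an isomorphism onto the $S$-indexed product of $(Q^M,\le,\ast,\unit)$.

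Third, by Theorem~\ref{thm:quantale-lifting} (both clauses, since $Q$ is a commutative boolean quantale and $M$ a partial commutative monoid), $(Q^M,\le,\ast,\unit)$ is itself a commutative boolean quantale. Finally, an arbitrary $S$-indexed product of a commutative boolean quantale is again one: the complete-lattice structure, the monoid structure (with both multiplication and unit taken componentwise), the two distributivity laws, commutativity, and complementation are all inherited coordinatewise from the factor $Q^M$, regardless of the cardinality of $S$ --- the case $S=\emptyset$ just yields the trivial one-element quantale. Transporting along $\Phi^{-1}$ then gives the claim.

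The main obstacle is mild: in effect only the convolution identity above requires real (if short) work, alongside the standard but worth-stating observation that even infinite products of commutative boolean quantales remain commutative boolean quantales. It is also worth flagging why the seemingly more direct route --- applying Theorem~\ref{thm:quantale-lifting} to $S\times M$ itself --- fails: no element of $S\times M$ is a two-sided unit (a pair $(x,1)$ acts as a unit only on the fibre over $x$), so $S\times M$ is not a partial monoid in this paper's sense; passing to the product of quantales is precisely what sidesteps this.
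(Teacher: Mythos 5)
Your argument is correct, but it cannot be ``the same as the paper's'' because the paper offers no proof of this statement at all: Theorem~\ref{thm:sh-lifting} is imported from~\cite{DHS14}, where it arises as an instance of an $n$-dimensional generalisation of the power-series lifting (roughly, by allowing partial monoids with a set of local units, so that $S\times M$ itself becomes an admissible carrier). Your route is different and self-contained given Theorem~\ref{thm:quantale-lifting}: you curry to exhibit $Q^{S\times M}\cong (Q^M)^S$ as an $S$-indexed power of the one-dimensional power-series quantale, check that the two-dimensional convolution and unit correspond to the componentwise one-dimensional ones (the one computation that genuinely needs doing, and which you do correctly, since the convolution leaves the first coordinate untouched), and then invoke closure of commutative boolean quantales under arbitrary products, which holds because every axiom in sight --- complete-lattice structure, the two quantale distributivity laws, lattice distributivity, complementation, commutativity, and the unit laws --- is componentwise. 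Your closing remark is also well taken and worth keeping: Theorem~\ref{thm:quantale-lifting} cannot be applied to $S\times M$ directly because, with multiplication defined only on pairs agreeing in the first coordinate, $S\times M$ has no global unit and so is not a partial monoid in the paper's sense; the product-of-quantales detour (or, in~\cite{DHS14}, a relaxation of the notion of unit) is exactly what is needed. The trade-off is that your proof buys independence from the generalised machinery of~\cite{DHS14} at the cost of a currying isomorphism and a (routine) product construction, while the cited approach scales uniformly to $n$ dimensions.
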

We have implemented partial monoids and quantale by using Isabelle's
type class and local infrastructure for engineering mathematical
structures, building on existing libraries for monoids, quantales and
complete lattices.


\section{Assertion Quantales}\label{S:assertion_quantale}

In language theory, power series have been introduced for modelling
formal languages.  In that case, $S$ is the free monoid $X^\ast$ and
$Q$ can be taken as a semiring $(Q,+,\cdot,0,1)$, because there are
only finitely many ways of splitting words into prefix/suffix pairs in
convolutions.  In the particular case of the boolean semiring
$\mathbb{B}$, where $\cdot$ is conjunction, power series $f:X^\ast\to
\mathbb{B}$ are interpreted as characteristic functions or predicates
that indicate whether or not a word is in a set. Since, in this case,
sets are languages, convolution specialises to $(f\cdot g)\ x =
\sum_{x=yz} f\ x \sqcap g\ y$, hence, identifying predicates with
their extensions, to the language product $p\cdot q=\{yz\mid y\in
p\wedge z\in q\}$.

More generally, we consider power series $S\to \mathbb{B}$ from a
partial monoid $S$ into the boolean quantale $\mathbb{B}$ and set up
the connection with separation logic. There, one is interested in
modelling assertions or predicates over the memory heap. 
The heap can be
represented abstractly by a resource monoid~\cite{COY07}, which is simply a
partial commutative monoid. In analogy to the
language case, an assertion $p$ of separation logic is a
boolean-valued function from a resource monoid $M$, hence a power
series $p:M\to\mathbb{B}$. Then Theorem~\ref{thm:quantale-lifting}
applies.
\begin{corollary}\label{cor:heap-ass-quantale}
  The assertions $\mathbb{B}^M$ over resource monoid $M$ form a 
  commutative boolean quantale with convolution as separating 
  conjunction. 
\end{corollary}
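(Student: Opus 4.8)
The plan is to derive this purely as an instance of Theorem~\ref{thm:quantale-lifting}, since a resource monoid is by definition a partial commutative monoid, so the only work is to exhibit the booleans as a commutative boolean quantale with the multiplication for which convolution becomes separating conjunction. First I would fix $Q=\mathbb{B}$, the two-element lattice $\{0,1\}$ with $0\le 1$, and take its monoid multiplication to be meet (logical conjunction), with unit $1=U$. That this is a boolean quantale is routine: $(\mathbb{B},\le)$ is a complete lattice, $(\mathbb{B},\sqcap,1)$ is a monoid, the distributivity laws $(\sum_{i\in I}x_i)\sqcap y=\sum_{i\in I}(x_i\sqcap y)$ and $x\sqcap(\sum_{i\in I}y_i)=\sum_{i\in I}(x\sqcap y_i)$ hold because $\mathbb{B}$ is a finite (hence completely) distributive lattice, the analogous laws for $+$ and $\prod$ hold for the same reason, and $\mathbb{B}$ is complemented. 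Moreover $\sqcap$ is commutative, so $\mathbb{B}$ is a \emph{commutative} boolean quantale.

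Next I would apply Theorem~\ref{thm:quantale-lifting} with this $Q$ and with $M$ the given resource monoid. Since $M$ is a partial monoid and $\mathbb{B}$ a boolean quantale, $(\mathbb{B}^M,\le,\ast,\unit)$ is a boolean quantale; since both $M$ and $\mathbb{B}$ are commutative, so is $\mathbb{B}^M$. Here order and suprema are pointwise, the unit is $\unit=\lambda x.\ \iif\ (x=1)\ \tthen\ 1\ \eelse\ 0$, and multiplication is the convolution $(p\ast q)\ x=\sum_{x=y\ast z}p\ y\sqcap q\ z$, where — as noted in the general setup — the split variables range over $M$, so undefined splittings (those through $\bot$) contribute nothing.

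It then remains to identify this convolution with separating conjunction. Identifying each power series $p:M\to\mathbb{B}$ with its extension $\{x\in M\mid p\ x=1\}\subseteq M$, the supremum over $\mathbb{B}$ is existential quantification and $\sqcap$ is binary conjunction, so $(p\ast q)\ x=1$ iff there are $y,z\in M$ with $x=y\ast z$, $p\ y=1$ and $q\ z=1$; that is, $p\ast q=\{y\ast z\mid y\in p\ \wedge\ z\in q\}$, the language product over resources, which is exactly the standard definition of the separating conjunction of $p$ and $q$ (and, correspondingly, $\unit$ is the empty-heap assertion $\eemp=\{1\}$).

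There is no genuine obstacle here: the substance is carried entirely by Theorem~\ref{thm:quantale-lifting}, and the rest is bookkeeping. The only points that need care are, first, choosing the multiplication on $\mathbb{B}$ to be meet rather than join — the alternative boolean-quantale structure on $\mathbb{B}$ mentioned earlier would make convolution collapse to a join over splits rather than to $\ast$ — and, second, making the identification of two-valued power series with subsets of $M$ precise, so that the convolution formula literally unfolds to the textbook definition of $\ast$. In the Isabelle development this amounts to registering the appropriate type-class instance for $\mathbb{B}$ and letting the lifting theorem discharge the quantale axioms automatically.
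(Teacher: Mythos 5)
Your proposal is correct and follows exactly the route the paper intends: the corollary is stated as an immediate instance of Theorem~\ref{thm:quantale-lifting} with $Q=\mathbb{B}$ and $M$ the resource monoid, and you supply precisely the bookkeeping (the commutative boolean quantale structure on $\mathbb{B}$ and the unfolding of convolution to the language product over resources) that the paper leaves implicit. Your remark that the multiplication on $\mathbb{B}$ must be meet rather than the join mentioned as the example in Section~\ref{S:preliminaries} is a legitimate and careful clarification of a point the paper glosses over.
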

The logical structure of the assertion quantale $\mathbb{B}^M$ is as
follows. The predicate $\zero$ is a contradiction whereas $\unit$
holds of the empty resource and is false otherwise. The operations
${\rm \Sigma}$ and ${\rm \Pi}$ correspond to existential and universal
quantification; their finite cases yield conjunctions and
disjunctions. The order $\le$ is implication. Convolution becomes
\begin{equation*}
  (p\ast q)\ x=  \sum_{x=y\ast z} p\ y\sqcap q\ z.
\end{equation*}
By $x=y \ast z$, resource $x$ is separated into resources $y$ and
$z$. By $p\ y\sqcap q\ z$, the value of predicate $p$ on $y$ is
conjoined with that of $q$ on $z$. Finally, the supremum is true if
one of the conjunctions holds for some splitting of $x$.

As for languages, one can again identify predicates with their
extensions. Then
\begin{equation*}
  p\ast q = \{ y\ast z \in M\mid y\in p\wedge z\in q\},
\end{equation*}
and separating conjunction becomes a language product over
resources. The analogy to language theory is even more striking
when considering multisets over a finite set $X$ as resources, which
form the free commutative monoids over $X$.

Applications of separation logic, however, require program states
which are store-heap pairs.  Now Theorem~\ref{thm:sh-lifting} applies.
\begin{corollary}\label{cor:heap-ass-quantale}
  The assertions $\mathbb{B}^{S\times M}$ over set $S$ (the store) and
  resource monoid $M$ form a commutative boolean quantale with
  convolution as separating conjunction. For all $p,q:S\times M\to
  \mathbb{B}$, $s\in S$ and $h\in M$,
\begin{equation*}
    (p \ast q)\ (s,h) = \sum_{h=h_1\ast h_2} p\ (s,h_1)\sqcap\ q\ (s,h_2). 
\end{equation*}
\end{corollary}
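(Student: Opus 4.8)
The plan is to read this off as the $Q=\mathbb{B}$ instance of Theorem~\ref{thm:sh-lifting}, in exactly the same way the one-dimensional assertion quantale was obtained from Theorem~\ref{thm:quantale-lifting} earlier in this section; the only new ingredient is that the store coordinate is carried along as a passive parameter while convolution acts on the heap coordinate alone. So the first task is to check that the hypotheses of Theorem~\ref{thm:sh-lifting} are met by the present data. The store $S$ is an arbitrary set, so there is nothing to verify for it. A resource monoid is, by definition, a partial commutative monoid, so $M$ supplies precisely the partial commutative monoid the theorem needs. Finally I would confirm that the two-element algebra $\mathbb{B}$, taken with meet $\sqcap$ as its multiplication (the convention fixed for assertions in Section~\ref{S:intro}), is a commutative boolean quantale: $(\mathbb{B},\le)$ is trivially a complete lattice, $(\mathbb{B},\sqcap,\top)$ is a commutative monoid, meet distributes over arbitrary suprema in a two-element lattice, and $\mathbb{B}$ is complemented and lattice-distributive.

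With these instances in place, Theorem~\ref{thm:sh-lifting} yields at once that $\mathbb{B}^{S\times M}$ is a commutative boolean quantale, inheriting its order (implication), its suprema and infima (existential and universal quantification), its complement (negation), its bottom $\zero$, and the unit $\unit$, which — since the lifted unit is concentrated on $y=1$ — holds precisely of the store-heap pairs with empty heap. It then remains only to display the lifted multiplication in the advertised form. Substituting $Q=\mathbb{B}$ into the two-dimensional convolution of Section~\ref{S:preliminaries}, and using that multiplication on $\mathbb{B}$ is $\sqcap$, gives
\begin{equation*}
  (p\ast q)\ (s,h)=\sum_{h=h_1\ast h_2} p\ (s,h_1)\sqcap q\ (s,h_2),
\end{equation*}
where the supremum ranges over the (possibly empty) set of splittings $h=h_1\ast h_2$ and so reproduces the usual ``there exist disjoint subheaps'' reading. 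This is exactly the textbook definition of separating conjunction on store-heap models, so convolution on $\mathbb{B}^{S\times M}$ coincides with separating conjunction, as claimed.

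There is no genuine obstacle here: the substance of the argument is Theorem~\ref{thm:sh-lifting}, and the corollary merely instantiates it and unfolds one definition. The single point that deserves a moment's care is the choice of multiplication on $\mathbb{B}$ — it must be meet rather than join, so that the lifted unit $\unit$ matches $\eemp$ and the lifted product matches $\ast$ rather than disjunction — but once this convention is pinned down the verification is purely mechanical, which is also why the Isabelle development discharges it automatically from the generic lifting lemmas.
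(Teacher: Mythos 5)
Your proposal is correct and matches the paper's treatment: the corollary is stated there as an immediate instance of Theorem~\ref{thm:sh-lifting} with $Q=\mathbb{B}$ (multiplication as meet), and the displayed formula is just the two-dimensional convolution unfolded. Your extra care about taking meet rather than join as the multiplication on $\mathbb{B}$ is a sensible check and consistent with the paper's convention for assertions, so there is nothing to add.
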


Quantales carry a rich algebraic structure. Their distributivity laws
give rise to continuity or co-continuity properties.  Therefore, many
functions constructed from the quantale operations have adjoints as
well as fixpoints, which can be iterated to the first limit
ordinal. This is well known in denotational semantics and important
for our approach to program verification.  In particular, separating
conjunction $\ast$ distributes over arbitrary suprema in
$\mathbb{B}^M$ and $\mathbb{B}^{S\times M}$ and therefore has an upper
adjoint: the \emph{magic wand} operation $\wand$, which is widely used
in separation logic. In the quantale setting, the adjunction gives us
theorems for the magic wand for free.

One can think of the power series approach to separation logic as a
simple account of the category-theoretical approach to O'Hearn and
Pym's logic of bunched implication~\cite{OHearnP99} in which
convolution generalises to coends and the quantale lifting is embodied
by Day's construction~\cite{Day}. For the design of verification tools
and our implementation in Isabelle, this simplicity is certainly an
advantage.


\section{Predicate Transformer Quantales} \label{S:pt_quantale}

Our algebraic approach to separation logic is based on predicate
transformers (cf. ~\cite{BvW99-book}).  This is in contrast to
previous state-transformer-based approaches and
implementations~\cite{COY07,NICTA,Tuerk}.  First of all, predicate
transformers are more amenable to algebraic
reasoning~\cite{BvW99-book}---simply because their source and target
types are similar. Second, the approach is more coherent within our
framework.  Predicate transformers can be seen once more as power
series and instances of Theorem~\ref{thm:quantale-lifting} describe
their algebras.

A \emph{state transformer} $f_R:A\to 2^B$ is often associated with a
relation $R\subseteq A\times B$ from set $A$ to set $B$ by defining $
f_R\ a=\{b \mid (a,b)\in R\}$. It can be lifted to a function $\langle
R\rangle:2^A\to 2^B$ defined by $\langle R\rangle X = \bigcup_{a\in X}
f_R\ a$ for all $X\subseteq A$. More importantly, state transformers
are lifted to \emph{predicate transformers} $\predT{R}:2^B\to 2^A$ by
defining
\begin{equation*}
  \predT{R}\ Y=\{x \mid f_R\ x\subseteq Y\}
\end{equation*}
for all $Y\subseteq B$. The modal box and diamond notation is
justified by the correspondance between diamond operators and Hoare
triples as well as box operators and weakest liberal precondition
operators in the context of modal semirings~\cite{MoellerStruth}. In
fact we obtain the adjunction $\langle R\rangle X \subseteq
Y\Leftrightarrow X\subseteq [R]Y$ from the above definitions.

Predicate transformers in $(2^A)^{2^B}$ form complete distributive
lattices~\cite{BvW99-book}. In the power series setting, this follows
from Theorem~\ref{thm:quantale-lifting} in two steps, ignoring the
monoidal structure. Since $\mathbb{B}$ forms a complete distributive
lattice, so do $2^B\cong \mathbb{B}^B$ and $2^A\cong\mathbb{B}^A$ in
the first step, and so does $(2^A)^{2^B}$ in the second one.

In addition, predicate transformers in $(2^A)^{2^A}$ form a monoid
under function composition and the identity function as the unit.  It
follows that those predicate transformers form a distributive
near-quantale, whereas the monotone predicate transformers in
$(2^A)^{2^A}$, which satisfy $p\le q\Rightarrow f\ p\le f\ q$, form a
distributive pre-quantale~\cite{BvW99-book}.

Here, \emph{near-quantale} means a quantale where $x\cdot (\sum_{i\in
  I}y_i) = \sum_{i\in I} (x\cdot y_i)$, the left distributivity law,
need not hold. We call \emph{pre-quantale} a near-quantale in which
the left monotonicity law $x\le y\Rightarrow z\cdot x\le z\cdot y$
holds. In these cases, the monoidal parts of the lifting are not
obtained with the power series technique because function composition
is not a convolution.

Adapting these results to separation logic requires the consideration
of assertion quantales $\mathbb{B}^M$ or $\mathbb{B}^{S\times M}$ with
store $S$ and resource monoid $M$ instead of the powerset algebra over
a set $A$. But these quantales can be lifted as boolean
algebras---disregarding a convolution on predicate transformers, which
is not needed for separation logic---and combined with the monoidal
structure of function composition as previously. This yields the
following result.
\begin{theorem}\label{thm:sep-pt-quantales}
  Let $S$ be a set, $M$ a resource monoid and $\mathbb{B}^{S\times M}$
  an assertion quantale. The monotone predicate transformers over
  $\mathbb{B}^{S\times M}$ form a distributive pre-quantale.
\end{theorem}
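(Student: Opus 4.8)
The plan is to follow the two-layer lifting strategy sketched in the paragraph preceding the statement, then add function composition as the monoidal operation and finally restrict to the monotone fragment.

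First I would record that $\mathbb{B}^{S\times M}$ is a commutative boolean quantale by Theorem~\ref{thm:sh-lifting}; in particular its lattice reduct is a complete distributive lattice. Disregarding the convolution structure entirely, a predicate transformer is then just a function from the underlying set of $\mathbb{B}^{S\times M}$ into the quantale $\mathbb{B}^{S\times M}$, so by the lattice part of Theorem~\ref{thm:quantale-lifting} (equivalently, by the elementary fact that a function space into a complete distributive lattice, ordered pointwise, is again one) the predicate transformers $(\mathbb{B}^{S\times M})^{\mathbb{B}^{S\times M}}$ form a complete distributive lattice in which suprema and infima are computed argumentwise. Next I would check that the monotone predicate transformers form a complete distributive \emph{sublattice}: the pointwise supremum and the pointwise infimum of a family of monotone maps is again monotone, since any inequality $p\le q$ is propagated through each member and hence through the bound; thus joins and meets of monotone transformers agree with those of the ambient lattice and distributivity is inherited. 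Finally, the identity map is monotone and the composition of monotone maps is monotone, so $(\circ,\mathrm{id})$ restricts to a monoid on the monotone fragment.

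It then remains to verify the two (pre-)quantale axioms. Right distributivity $(\sum_{i\in I} f_i)\circ g = \sum_{i\in I}(f_i\circ g)$ is immediate because composition on the right is precomposition of the argument and suprema are argumentwise: for every $p$ we get $((\sum_i f_i)\circ g)\,p = (\sum_i f_i)(g\,p) = \sum_i f_i(g\,p) = \sum_i (f_i\circ g)\,p$. Left monotonicity $f\le g\Rightarrow h\circ f\le h\circ g$ is the place where monotonicity of the \emph{left} argument is genuinely used: from $f\,p\le g\,p$ for all $p$ and monotonicity of $h$ we obtain $h(f\,p)\le h(g\,p)$, i.e.\ $h\circ f\le h\circ g$. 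Together with the complete distributive lattice and the monoid this is exactly a distributive pre-quantale. I would close by remarking that left distributivity $h\circ(\sum_i f_i)=\sum_i(h\circ f_i)$ fails in general, since an arbitrary monotone $h$ need not preserve suprema, which is why the statement asserts a pre-quantale rather than a quantale, and why (as the text notes) the monoidal part here is not obtained from a convolution.

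The only real subtlety — the main obstacle — is the bookkeeping in the sublattice step: one must make sure that the monotone predicate transformers form a sub-join-semilattice whose join is computed exactly as in the ambient lattice, not merely some abstractly isomorphic sublattice, because otherwise the symbol $\sum$ in the right-distributivity law would be ambiguous. Since the pointwise supremum of monotone maps is monotone this holds, but it is the one point that has to be got right; everything else is a routine transfer of properties through the pointwise lifting.
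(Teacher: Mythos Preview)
Your proposal is correct and follows essentially the same route as the paper: lift the complete distributive lattice structure pointwise to all predicate transformers, check that the monotone ones are closed under (pointwise) suprema, infima, identity and composition so that these operations restrict unchanged, and then observe that right distributivity of composition over suprema holds by the pointwise computation while left monotonicity is exactly where monotonicity of the outer transformer is used. The only cosmetic difference is that the paper phrases the first half as ``the full predicate-transformer space is a near-quantale and the monotone ones form a subalgebra'', whereas you verify the axioms directly on the monotone fragment; the content is the same.
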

The proof requires showing that the predicate transformers over
$\mathbb{B}^{S\times M}$ form a near quantale and checking that the
monotone predicate transformers form a subalgebra of this
near-quantale. In fact, the unit predicate transformer is
monotone---which is the case---and that the quantale operations of
suprema, infima and composition preserve monotonicity. This is implied
by properties such as
\begin{equation*}
[R\cup S] = [R] \sqcap [S],\qquad [R;S]= [R]\cdot [S] = \lambda x.\ [R]\ ([S]\ x).
\end{equation*}

Monotone predicate transformers are powerful enough to derive the
standard inference rules of Hoare logic as verification conditions
(Section~\ref{sec:verification}) and the usual rules of Morgan's
refinement calculus (Section~\ref{sec:refinement}). Derivation of the
frame rule of separation logic, however, requires a smaller class of
predicate transformers defined as follows.

A state transformer $f$ is \emph{local} ~\cite{COY07} if $f(x \ast y)
\le (f\ x) \ast \{y\}$ for $x \ast y\neq \bot$. Intuitively, this
means that the effect of such a state transformer is restricted to a
part of the heap; see~\cite{COY07} for further discussion.
Analogously, we call a predicate transformer $F$ \emph{local} if
\begin{equation*}
(F\ p) \ast q \le F\ (p\ast q). p
\end{equation*}
It is easy to show that the two definitions are compatible.
\begin{lemma}\label{lem:local-prop2}
  State transformer $f_R$ is local iff predicate transformer
  $[R]$ is local.
\end{lemma}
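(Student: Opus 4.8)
The plan is to route the equivalence through the diamond operator $\langle R\rangle$ (the lower adjoint of $[R]$), whose pointwise behaviour $\langle R\rangle\{x\}=f_R\ x$ keeps the link with the state-transformer definition transparent. The first step is to establish the intermediate characterisation
\[
  f_R\text{ local}\iff \langle R\rangle(p\ast q)\le(\langle R\rangle p)\ast q\ \text{ for all }p,q.
\]
The direction from right to left follows by instantiating $p$ and $q$ with the singleton predicates $\{x\}$ and $\{y\}$: since $\{x\}\ast\{y\}=\{x\ast y\}$ when $x\ast y\neq\bot$ and $\{x\}\ast\{y\}=\zero$ otherwise, this specialises to $f_R(x\ast y)\le(f_R\ x)\ast\{y\}$ for defined $x\ast y$ (and a trivially true inequality when $x\ast y=\bot$, so the side condition in the definition of a local state transformer is simply absorbed by convolution). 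For the direction from left to right, write $p\ast q=\sum_{x\in p,\,y\in q}\{x\}\ast\{y\}$ and use that $\langle R\rangle$ distributes over arbitrary suprema by its definition as a union, and that $\ast$ distributes over arbitrary suprema in the quantale $\mathbb{B}^{S\times M}$ (Theorem~\ref{thm:sh-lifting}); applying locality of $f_R$ underneath the suprema then yields the inequality.

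The second step transfers this to $[R]$ using the adjunction $\langle R\rangle X\le Y\Leftrightarrow X\le[R]\ Y$, from which the counit $\langle R\rangle([R]\ p)\le p$ and the unit $p\le[R]\ (\langle R\rangle p)$ follow in the usual way. For the forward implication, by the adjunction $([R]\ p)\ast q\le[R]\ (p\ast q)$ is equivalent to $\langle R\rangle\bigl(([R]\ p)\ast q\bigr)\le p\ast q$; the intermediate characterisation (with $[R]\ p$ for $p$), monotonicity of $\ast$, and the counit then give
\[
  \langle R\rangle\bigl(([R]\ p)\ast q\bigr)\le(\langle R\rangle([R]\ p))\ast q\le p\ast q .
\]
For the converse, $\langle R\rangle(p\ast q)\le(\langle R\rangle p)\ast q$ is equivalent, again by the adjunction, to $p\ast q\le[R]\ \bigl((\langle R\rangle p)\ast q\bigr)$; locality of $[R]$ (with $\langle R\rangle p$ for $p$), monotonicity of $\ast$, and the unit give
\[
  p\ast q\le([R]\ (\langle R\rangle p))\ast q\le[R]\ \bigl((\langle R\rangle p)\ast q\bigr).
\]

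The only point that requires care is the bookkeeping around partiality of $\ast$: convolution silently discards every split with $x\ast y=\bot$, so one must check that the pointwise locality inequality and its suprema-lifted form genuinely agree on those degenerate cases, which they do, since both collapse to $\zero$. Beyond that, the argument is a routine chase through the Galois connection together with the distributivity laws of $\mathbb{B}^{S\times M}$, and should be largely automatic in Isabelle once the adjunction and the distributivity lemmas for $\ast$ are in scope. A shorter variant avoids the detour through $\langle R\rangle$ by reasoning directly with the magic wand $\wand$, the upper adjoint of $\ast$, but going via the diamond keeps the correspondence with the state-transformer definition most visible.
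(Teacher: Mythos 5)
Your proof is correct. The paper itself offers no argument for this lemma beyond the remark that the two definitions are ``easy to show'' compatible, so there is nothing to diverge from; your route---first characterising locality of $f_R$ as $\langle R\rangle(p\ast q)\le(\langle R\rangle p)\ast q$ via singletons and the distributivity of $\langle R\rangle$ and $\ast$ over suprema, then transferring to $[R]$ through the unit and counit of the adjunction $\langle R\rangle X\le Y\Leftrightarrow X\le[R]\,Y$---is the standard and natural one, and each step checks out, including the handling of undefined splits, which convolution discards.
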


The final theorem in this section establishes the local monotone
predicate transformers as a suitable algebraic framework for
separation logic.
\begin{theorem}\label{prop:pt-loc-quantale}
  Let $S$ be a set and $M$ a resource monoid. The local monotone
  predicate transformers over the assertion quantale
  $\mathbb{B}^{S\times M}$ form a distributive pre-quantale.
\end{theorem}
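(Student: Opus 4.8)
The plan is to derive the result by exhibiting the local monotone predicate transformers as a subalgebra of the distributive pre-quantale of monotone predicate transformers over $\mathbb{B}^{S\times M}$ provided by Theorem~\ref{thm:sep-pt-quantales}. Concretely, I would show that the local monotone transformers are closed under the operations of that pre-quantale: they contain the unit (the identity transformer $\mathrm{id}$), they are closed under arbitrary suprema and arbitrary infima, and they are closed under function composition $\cdot$. Once closure is established, the distributive pre-quantale structure transfers for free: closure under all sups and infs makes the local monotone transformers a complete sublattice whose sups and infs coincide with the ambient ones, so the infinitary distributivity laws hold verbatim; closure under $\cdot$ together with membership of $\mathrm{id}$ gives the monoid; and right distributivity of $\cdot$ over suprema as well as the left monotonicity law are inherited, since $\cdot$ is just the restriction of the ambient operation.

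For the closure checks I would argue pointwise on assertions $p,q\in\mathbb{B}^{S\times M}$, using the locality inequality $(F\ p)\ast q \le F\ (p\ast q)$. The identity is local because $(\mathrm{id}\ p)\ast q = p\ast q = \mathrm{id}\ (p\ast q)$. For a family $(F_i)_{i\in I}$ of local monotone transformers, locality of $\sum_{i\in I} F_i$ uses that $\ast$ distributes over arbitrary suprema in $\mathbb{B}^{S\times M}$: $((\sum_i F_i)\ p)\ast q = \sum_i ((F_i\ p)\ast q) \le \sum_i F_i\ (p\ast q) = (\sum_i F_i)\ (p\ast q)$. For infima the distributivity route is unavailable — $\ast$ need not distribute over infima — so I would use monotonicity of $\ast$ instead: from $(\prod_i F_i)\ p \le F_j\ p$ we get $((\prod_i F_i)\ p)\ast q \le (F_j\ p)\ast q \le F_j\ (p\ast q)$ for every $j$, and taking the infimum over $j$ gives $((\prod_i F_i)\ p)\ast q \le (\prod_i F_i)\ (p\ast q)$. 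Finally, for composition $F\cdot G$ (which applies $G$ first), locality of $F$ gives $((F\cdot G)\ p)\ast q = (F\ (G\ p))\ast q \le F\ ((G\ p)\ast q)$, locality of $G$ gives $(G\ p)\ast q \le G\ (p\ast q)$, and monotonicity of $F$ then yields $F\ ((G\ p)\ast q) \le F\ (G\ (p\ast q)) = (F\cdot G)\ (p\ast q)$.

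The development is mostly routine equational and order reasoning; the two points that deserve care are the infimum case, where the distributivity argument used for suprema must be replaced by a monotonicity argument, and the composition case, where monotonicity of the outer transformer is genuinely needed — this is precisely why the theorem is stated for the \emph{monotone} local transformers rather than arbitrary ones. It is also worth recording explicitly that closure under arbitrary suprema and infima forces the sups and infs of the subalgebra to agree with those of the ambient quantale, which is what makes the infinitary distributivity laws, and hence the qualifier ``distributive'', carry over with no extra work.
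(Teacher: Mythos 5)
Your proposal is correct and follows essentially the same route as the paper, which likewise establishes the result by checking that the unit/zero transformers are local and that suprema, infima and composition preserve locality (and monotonicity), so that the structure is inherited from the pre-quantale of Theorem~\ref{thm:sep-pt-quantales}. Your pointwise verifications --- distributivity of $\ast$ over suprema for the sup case, isotonicity of $\ast$ for the inf case, and monotonicity of the outer transformer for composition --- are exactly the details the paper leaves implicit.
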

Once again it must be checked that the zero predicate transformer is
local---which is the case---and that the quantale operations preserve
locality and monotonicity.

We have implemented the whole approach in Isabelle; all theorems have
been formally verified, using mainly
Theorem~\ref{thm:quantale-lifting} for lifting to predicate
transformers. An alternative Isabelle implementation of predicate
transformers as lattices and Boolean algebras with operators is due to
Preoteasa~\cite{Preoteasa11}.


\section{Verification Conditions} \label{sec:verification}

The pre-quantale of local monotone predicate transformers supports the
derivation of verification conditions by equational reasoning. A
standard set of such conditions are the inference rules of Hoare
logic. For sequential programs, Hoare logic provides one inference
rule per program construct. This suffices to eliminate the control
structure of a program and generate verification conditions for the
data flow.

The quantale setting also guarantees that the finite iteration
$F^\ast$ of a predicate transformer is well defined. This supports a
shallow algebraic embedding of a simple while language with the
usual pseudocode for the verification of imperative programs.

First we lift predicates to predicate transformers~\cite{BvW99-book}:
\begin{equation*}
  \predT{p} = \lambda q.\ \overline{p} + q.
\end{equation*}
With predicates modelled as relational subidentities, this definition
is justified by the lifting from the previous section: $(s, s ) \in
\predT{p}\ q\ $ iff $(s, s) \in p \Rightarrow (s, s) \in q$.

Second, we change notation, to use descriptive while program syntax
for predicate transformers. We write $\sskip$ for the quantale unit
(the identity function) and $;$ for function composition.  We encode
the semantics of conditionals and while loops as
\begin{equation*}
	\ifstat{p}{F}{G} = \predT{p}\cdot F + \predT{\overline{p}}\cdot G, \qquad 
	\whileloop{p}{F} = (\predT{p}\cdot F)^\ast\cdot\predT{\overline{p}}. 
\end{equation*}
Third, we provide the standard assertions notation for programs via
Hoare triple syntax:
\begin{equation*}
	\triple{p}{F}{q} \Leftrightarrow p \le F\ q.
\end{equation*}
Box notation shows that $\triple{p}{[R]}{q}\Leftrightarrow p\le [R]q$
for relational program $R$. Thus $[R]q=\mathit{wlp}(R,q)$ is the
standard weakest liberal precondition of program $R$ and postcondition
$q$. It also explains our slight abuse of relational or imperative
notation for predicate transformers: e.g. we write $[R];[S]$ instead
of $[R]\cdot [S]$ because the latter expression is equal to $[R;S]$,
as indicated in the previous section.

\begin{proposition}
  Let $p,q,r,p',q'\in \mathbb{B}^{S\times M}$ be predicates. Let
  $F,G,H$ be monotone predicate transformers over $\mathbb{B}^{S\times
    M}$, with $H$ being local.  Then the rules of propositional
  Hoare logic (no assignment rule) and the frame rule of
  separation logic are derivable.
\begin{align*}
	& \triple{p}{\sskip}{p}, \\
	p \le p' \wedge  q' \le q\ \wedge \triple{p'}{F}{q'} \ \Rightarrow\ &
		\triple{p}{F}{q}, \\
	\triple{p}{F}{r}\ \wedge \triple{r}{G}{q}\ \Rightarrow\ &
		\triple{p}{F; G}{q}, \\
	\triple{p \sqcap r}{F}{G}\ \wedge \triple{p \sqcap \overline{r}}{G}{q}\ \Rightarrow\ &
		\triple{p}{\ifstat{r}{F}{G}}{q}, \\
	\triple{p \sqcap q}{F}{p}\ \Rightarrow\ & \triple{p}{\whileloop{q}{F}}{\overline{b} \sqcap p}, \\
	\triple{p}{H}{q} \Rightarrow\ & \triple{p \ast r}{H}{q \ast r}.
\end{align*}
\end{proposition}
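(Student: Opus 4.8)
The plan is to unfold every rule to the primitive definition $\triple{p}{F}{q}\Leftrightarrow p\le F\,q$ together with the program encodings, and then argue purely by order/equational reasoning inside the pre-quantale of (local) monotone predicate transformers over $\mathbb{B}^{S\times M}$, exploiting the Boolean-algebra structure of the assertion quantale $\mathbb{B}^{S\times M}$ from the earlier corollary. Three rules are essentially immediate. For $\triple{p}{\sskip}{p}$: since $\sskip$ is the quantale unit, i.e.\ the identity transformer, the claim is just $p\le p$. For the consequence rule: from $p\le p'$, $p'\le F\,q'$, monotonicity of $F$ and $q'\le q$ we chain $p\le p'\le F\,q'\le F\,q$. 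For sequencing: $(F;G)\,q = F\,(G\,q)$, so from $r\le G\,q$ and monotonicity of $F$ we get $F\,r\le F\,(G\,q)$, hence $p\le F\,r\le (F;G)\,q$.

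For the conditional I would evaluate $\ifstat{r}{F}{G}=\predT{r}\cdot F+\predT{\overline r}\cdot G$ at the postcondition. Since composition and suprema of predicate transformers are computed pointwise, and using $\predT{p}=\lambda q.\ \overline p+q$, this reduces the required precondition to an expression built from $\overline r+F\,q$ and $r+G\,q$. It then suffices to check $p\le\overline r+F\,q$ and $p\le r+G\,q$ separately, and each follows from the Boolean decomposition $p=(p\sqcap r)+(p\sqcap\overline r)$ in $\mathbb{B}^{S\times M}$, the corresponding hypothesis ($p\sqcap r\le F\,q$, resp.\ $p\sqcap\overline r\le G\,q$), and absorption; distributivity of the assertion quantale is exactly what makes these manipulations go through.

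The while rule is where the real work lies. I would unfold $\whileloop{q}{F}=(\predT{q}\cdot F)^{\ast}\cdot\predT{\overline q}$; writing $G=\predT{q}\cdot F$ and $r_0=\predT{\overline q}\,(\overline q\sqcap p)$, the goal is $p\le G^{\ast}\,r_0$. Because the pre-quantale structure makes $F^{\ast}$ the supremum of the finite powers $F^{n}$, it suffices to establish $p\le G^{n}\,r_0$ for every $n$ and then pass to the limit via the star/fixpoint law. This I would prove by induction on $n$: the base case uses $r_0=\overline{\overline q}+(\overline q\sqcap p)\ge p$; the inductive step uses the invariant hypothesis $p\sqcap q\le F\,p$, monotonicity of $F$ against the induction hypothesis $p\le G^{n}\,r_0$, the identity $\predT{q}\,Y=\overline q+Y$, and once more the split $p=(p\sqcap q)+(p\sqcap\overline q)$, from which $p\le\overline q+F\,(G^{n}\,r_0)=G^{n+1}\,r_0$. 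I expect this to be the main obstacle: no single step is deep, but it is the one place where the algebraic characterisation of $F^{\ast}$ must be reconciled with the order/duality conventions on predicate transformers and with pointwise Boolean reasoning, so stating the approximant induction and the limiting step cleanly in the pre-quantale is the delicate point.

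Finally, the frame rule drops out of locality. From $\triple{p}{H}{q}$, i.e.\ $p\le H\,q$, and monotonicity of $\ast$ in $\mathbb{B}^{S\times M}$ (which holds since $\ast$ distributes over arbitrary suprema there), we get $p\ast r\le (H\,q)\ast r$. Instantiating the locality inequality for $H$ at $q$ and $r$ gives $(H\,q)\ast r\le H\,(q\ast r)$. Chaining the two yields $p\ast r\le H\,(q\ast r)$, i.e.\ $\triple{p\ast r}{H}{q\ast r}$.
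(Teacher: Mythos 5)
Your proposal is correct and follows essentially the same route as the paper: the frame-rule derivation ($p\ast r\le (H\,q)\ast r\le H\,(q\ast r)$ by isotonicity of $\ast$ and locality) is exactly the paper's proof, and the remaining rules, which the paper dispatches as ``equally simple and fully automatic in Isabelle,'' are unfolded by you in the standard way via $\triple{p}{F}{q}\Leftrightarrow p\le F\,q$, monotonicity, Boolean splitting of $p$, and induction on the powers of $\predT{q}\cdot F$ for the loop. The one point you rightly flag as delicate---reconciling the star with the order conventions, given that suprema of box transformers act as pointwise meets (cf.\ $[R\cup S]=[R]\sqcap[S]$)---is handled correctly by your approximant induction, so no gap remains.
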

\begin{proof} 
  We derive the frame rule as an example. Suppose $p \le H\ q$. Then,
  by isotonicity of $\ast$ and locality,
$p\ast r\le (H\ q) \ast r \le H (q \ast r)$.\qed
\end{proof}
The remaining derivations are equally simple and fully automatic in
Isabelle.


\section{Refinement Laws}\label{sec:refinement}

To demonstrate the power of the predicate transformer approach to
separation logic we now outline its applicability to local reasoning
in program construction and transformation. More precisely, we show
that the standard laws of Morgan's refinement calculus~\cite{Mor98}
plus an additional framing law for resources can be derived with
little effort.  It only requires defining one single additional
concept---Morgan's specification statement---which already exists in
every predicate transformer quantale.

 Formally, for predicates $p,q\in\mathbb{B}^{S\times M}$, we define
 the \emph{specification statement} as
\begin{equation*}
  \spec{p}{q} = \sum \{F \mid p\le F\ q\}.
\end{equation*}
It models the most general predicate transformer or program which
links postcondition $q$ with precondition $p$. It is easy to see that
$\triple{p}{F}{q}\Leftrightarrow p \le F\ q$, which entails the
characteristic properties $\triple{p}{\spec{p}{q}}{q}$ and
$\triple{p}{F}{q} \Rightarrow F \le \spec{p}{q}$ of the specification
statement: program $\spec{p}{q}$ relates precondition $p$ with
postcondition $q$ whenever it terminates; and it is the largest
program with that property. It is easy to check that specification
statements over the pre-quantale of local monotone predicate
transformers are themselves local and monotone.

Like Hoare logic, Morgan's basic refinement calculus provides one
refinement law per program construct. Once more we ignore assignments
at this stage. We also switch to standard refinement notation with
refinement order $\sqsubseteq$ being the converse of $\le$. 

\begin{proposition}\label{P:morganderive}
  For $p,q,r,p',q'\in\mathbb{B}^{S\times M}$, and predicate
  transformer $F$ the following refinement laws are derivable in the
  algebra of local monotone predicate transformers.
\begin{align*}
	p \le q \Rightarrow \spec{p}{q} &\sqsubseteq \sskip, \\
	p' \le p \wedge q \le q' \Rightarrow \spec{p}{q} &\sqsubseteq \spec{p'}{q'}, \\
			\spec{0}{1} &\sqsubseteq F, \\
	F &\sqsubseteq \spec{1}{0}, \\
	\spec{p}{q} &\sqsubseteq \spec{p}{r}; \spec{r}{q}, \\
	\spec{p}{q} &\sqsubseteq \ifstat{b}{\spec{b\sqcap p}{q}}{\spec{\overline{b}\sqcap p}{q}}, \\
	\spec{p}{\overline{b}\sqcap p} &\sqsubseteq \whileloop{b}{\spec{b\sqcap p}{p}}, \\
	\spec{p \ast r}{q \ast r} &\sqsubseteq \spec{p}{q}. 
\end{align*}
\end{proposition}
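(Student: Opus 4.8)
The plan is to reduce every refinement law to the corresponding Hoare or frame rule of the previous section, using only the two characteristic properties of the specification statement. Recall that $\spec{p}{q}$ is the $\le$-greatest predicate transformer $F$ with $\triple{p}{F}{q}$; concretely, (i) $\triple{p}{\spec{p}{q}}{q}$, and (ii) $\triple{p}{F}{q}\Rightarrow F\le\spec{p}{q}$. Since $\sqsubseteq$ is the converse of $\le$, a law of the shape $\spec{p}{q}\sqsubseteq E$ is just $E\le\spec{p}{q}$, so by (ii) it suffices to prove the single triple $\triple{p}{E}{q}$; and that triple can in turn be assembled from the already-derived Hoare rules, because each specification statement occurring inside $E$ discharges the relevant premise by (i). In short, I would obtain the whole proposition by ``plugging (i) into'' the propositional Hoare rules and the frame rule and then appealing to (ii).

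Carrying this out: for $\spec{p}{q}\sqsubseteq\sskip$ under $p\le q$, note $\triple{p}{\sskip}{q}$ is just $p\le\sskip\ q=q$, then apply (ii). For $\spec{p}{q}\sqsubseteq\spec{p}{r};\spec{r}{q}$, apply the Hoare composition rule to $\triple{p}{\spec{p}{r}}{r}$ and $\triple{r}{\spec{r}{q}}{q}$ (both instances of (i)) to get $\triple{p}{\spec{p}{r};\spec{r}{q}}{q}$, then (ii). The conditional law follows the same way from the Hoare conditional rule applied to $\triple{p\sqcap b}{\spec{b\sqcap p}{q}}{q}$ and $\triple{p\sqcap\overline{b}}{\spec{\overline{b}\sqcap p}{q}}{q}$, and the loop law from the Hoare while rule applied to the invariant-preservation triple $\triple{p\sqcap b}{\spec{b\sqcap p}{p}}{p}$; each premise is again an instance of (i). The refinement consequence law is the Hoare consequence rule composed with (i) and (ii) in exactly the same fashion. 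The two extremal laws $\spec{0}{1}\sqsubseteq F$ and $F\sqsubseteq\spec{1}{0}$ I would read off directly from the definition of $\spec{\cdot}{\cdot}$ as a supremum: the defining set of $\spec{0}{1}$ is the whole carrier (every $F$ satisfies $0\le F\ 1$), so it is the top and $F\le\spec{0}{1}$; dually the defining set of $\spec{1}{0}$ is empty, so its supremum is the least predicate transformer.

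The genuinely interesting case, and the one I expect to need care, is the frame law $\spec{p\ast r}{q\ast r}\sqsubseteq\spec{p}{q}$, i.e.\ $\spec{p}{q}\le\spec{p\ast r}{q\ast r}$. By (ii) it suffices to show $\triple{p\ast r}{\spec{p}{q}}{q\ast r}$, and this is precisely the conclusion of the frame rule applied to $\triple{p}{\spec{p}{q}}{q}$, which holds by (i). The hypothesis of the frame rule that must be checked is that $\spec{p}{q}$ is a \emph{local} predicate transformer, and this is where the real content sits. Monotonicity of $\spec{p}{q}$ is routine from the pointwise definition of suprema; locality rests on the fact (from the previous section) that the local monotone predicate transformers form a subalgebra closed under suprema, together with the distribution of $\ast$ over suprema in $\mathbb{B}^{S\times M}$, which is what lets the supremum defining $\spec{p}{q}$ inherit locality. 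Once the specification statement is known to stay inside the pre-quantale of local monotone predicate transformers, so that the rules of the previous section apply both to it and to the composite programs on the right-hand sides, every law above is a one- or two-line equational consequence, discharged automatically in Isabelle.
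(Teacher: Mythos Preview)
Your approach is essentially identical to the paper's: it derives only the framing law explicitly---applying the frame rule to $\triple{p}{\spec{p}{q}}{q}$ and then invoking the maximality property (ii) of the specification statement---and remarks that the remaining laws follow in the same way from the corresponding Hoare rules, automatically in Isabelle. One small caveat: your justification of $F\sqsubseteq\spec{1}{0}$ via an \emph{empty} defining set is not literally correct in the full pre-quantale, since the constant-top transformer $\lambda q.\,U$ is local, monotone, and satisfies $1\le(\lambda q.\,U)\,0$; the paper does not spell this case out either.
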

\begin{proof} 
Using the frame rule, we derive the framing law as an example:
\begin{equation*}
	\triple{p}{\spec{p}{q}}{q} \Rightarrow \triple{p \ast r}{\spec{p}{q}}{q \ast r} 
		\Leftrightarrow \spec{p \ast r}{q \ast r} \sqsubseteq \spec{p}{q}
\end{equation*} 
\qed
\end{proof}
The proofs of the other refinement laws are equally simple, using the
corresponding Hoare rules in their proofs.  They are fully automatic
in Isabelle.


\section{Principles of Tool Design}\label{S:tool_design}

The previous sections have introduced a new algebraic approach to
separation logic based on power series and quantales with separating
conjunction modelled algebraically as convolution, that is, a language
product over resources. The theory hierarchy from partial monoids to
predicate transformer quantales has been formalised in Isabelle/HOL,
much of which was highly automatic and required only a moderate
effort. It benefits, to a large extent, from Isabelle's integrated
first-order theorem proving, SMT-solving and counterexample generation
technology.  These tools are highly optimised for equational
reasoning, interacting efficiently with the algebraic layer.

\begin{figure}[h]
  \centering 
\setlength{\tabcolsep}{10pt}
\renewcommand{\arraystretch}{2.2}
        \begin{tabular}{c|c|c}
algebra &  intermediate semantics   &  concrete semantics\\
\hline 
 control flow & abstract data flow & concrete data flow\\
 verification conditions  & - &
 \textbf{verification tools}\\
transformation laws & - &\textbf{construction tools}
        \end{tabular}
  \caption{Principles of Tool Design}
  \label{fig:design-principles}
\end{figure}

Another important feature is that the mathematical structures
formalised in Isabelle are all polymorphic---their elements can have
various types. Isabelle's type classes and locales, which have been
used for implementing mathematical hierarchies, then allow us to link
these abstract algebras with various concrete models, that is,
quantales with predicate transformers, predicate transformers with
binary relations and functions which update program states. In
particular, abstract resource monoids are linked with various concrete
models for resources, including the heap. By formalising these
soundness results in Isabelle, theorems are automatically propagated
across classes and models. Those proved for quantales, for instance,
become available automatically for predicate transformers over
concrete detailed store-heap models.

Finally, our development can build on excellent Isabelle libraries and
decades-long experience in reasoning with functions and relations, all
sorts of data structures and data types.  In particular, for program
construction and verification with separation logic, Isabelle provides
support for reasoning with pointers and the
heap~\cite{MehtaN05,Weber04}.

These features suggest a principled approach to program verification
and construction in Isabelle in which the control flow layer is
cleanly separated from the data flow layer.  The control flow is
modelled in a lightweight way within suitable algebras, which makes
tool design fast, simple and automatic, including the development of
verification conditions, transformation and refinement laws. Their
application can then be automated by programming Isabelle tactics.
The data flow can, by and large, rely on existing Isabelle libraries,
which have previously been designed for verification purposes. The
interface between these layers is provided, at an abstract level, by
soundness theorems and, at the concrete level, by assignment laws and
similar laws that link data and control. The approach has previously
been applied to simple while programs~\cite{sefm2014} and the
rely-guarantee approach~\cite{fm2014}.  Its
basic features a summarised in Figure~\ref{fig:design-principles}.

For separation logic, the algebraic structures used are partial
monoids, quantales and power series. The intermediate semantics is
provided by predicate transformers over assertion quantales based on
stores and resource monoids. The concrete models yield detailed
descriptions of the store and heap. These are explained in the
remaining sections of this article.



\section{Integration of Data Flow}\label{S:data_flow}

This section describes the integration of the data flow layer into our
Isabelle tools for program verification and construction.

As previously mentioned, program states are store-heap pairs $(s, h)$.
Program stores are implemented in Isabelle as records of program
variables, each of which has a \emph{retrieve} and an \emph{update}
function.  This approach is polymorphic and supports variables of any
Isabelle type.  For instance, Isabelle's built-in list data type and
list libraries can be used to reason about list-based programs.  Heaps
have been modelled in Isabelle as partial functions on
$\mathbb{N}$~\cite{MehtaN05,Weber04}; they therefore have type
$\mathbf{nat} \to \mathbf{nat\ option}$.

We implement assignments first as functions from states to states,
\begin{equation*}
	(`x := e) = \lambda (s, h).\ (x\_update\ s\ e, h),
\end{equation*}
where $`x$ is a program variable, $x\_update$ the update function for
$`x$, $(s, h)$  a state and $e$ an evaluated expression of the same type as $`x$.

Separation logic also requires a notion of mutation or heap update.
In Isabelle, it is first implemented in similar fashion as
\begin{equation*}
	(@e := e') = \lambda (s, h).\ (s, h[e \mapsto e']),
\end{equation*}
where $e$ and $e'$ are expressions that evaluate to natural numbers,
and $h[e \mapsto e']$ is the function that maps $e$ to $e'$ and is the
same as $h$ for the remaining expressions.

Secondly, we lift assignment and mutation functions to
predicate transformers as
\begin{equation*}
	\predT{f} = \lambda q.\ q \cdot f,
\end{equation*}
where $\cdot$ denotes function composition, as usual. This is
consistent with the definition of lifting in Section
\ref{S:pt_quantale}. As previously, we generally do not write the
lifting brackets explicitly, identifying program pseudocode with
predicate transformers to simplify verification notation.

With this infrastructure in place we can prove Hoare's assignment rule
and Reynolds' mutation rule for separation logic in the concrete heap
model.
\begin{proposition}
  The following rules are derivable in the concrete store-heap model:
\begin{equation*}
	p \le q[e/`x] \Rightarrow\ \triple{p}{(`x := e)}{q}, \qquad
	\triple{(e \mapsto -) * r}{(@e := e')}{(e \mapsto e') * r},
\end{equation*}
writing $q[e/`x]$ for the substitution of variable $`x$ by expression
$e$ in $q$ as well as $e \mapsto e'$ and $e\mapsto -$ for the
singleton heaps mapping $e$ to $e'$ and to any value, respectively.
\end{proposition}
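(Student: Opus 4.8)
\section*{Proof proposal}

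The plan is to discharge both rules by unfolding the definition of the Hoare triple, $\triple{p}{F}{q}\Leftrightarrow p\le F\ q$, together with the lifting $\predT{f}=\lambda q.\ q\cdot f$ from Section~\ref{S:data_flow}. After these unfoldings each rule reduces to a pointwise statement about store--heap pairs that is settled by a single substitution-like lemma about the concrete model, and no properties beyond the raw definitions of $(`x:=e)$ and $(@e:=e')$ are needed.

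For the assignment rule, $\triple{p}{(`x:=e)}{q}$ unfolds to $p\le \predT{(`x:=e)}\ q = q\cdot(`x:=e) = \lambda(s,h).\ q\ (x\_update\ s\ e,h)$. The key fact is the \emph{substitution lemma} $q[e/`x]=q\cdot(`x:=e)$: replacing $`x$ by $e$ in an assertion coincides with precomposing $q$ with the state update that performs the assignment. In the shallow record-based embedding this is essentially definitional---it follows from the retrieve/update simplification lemmas for program-variable records---so once it is in place the rule is immediate by transitivity with the hypothesis $p\le q[e/`x]$.

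For the mutation rule I would argue pointwise. Unfolding as above, the goal becomes: every state $(s,h)$ with $((e\mapsto -)\ast r)\ (s,h)$ also satisfies $\big((e\mapsto e')\ast r\big)\ (s,h[e\mapsto e'])$. Fix such an $(s,h)$; by the definition of $\ast$ on the assertion quantale (Corollary~\ref{cor:heap-ass-quantale}) there is a defined split $h=h_1\ast h_2$ with $h_1$ the singleton heap on the cell denoted by $e$ in store $s$ and $h_2\in r$, and definedness of the split forces that cell to lie outside $\mathrm{dom}(h_2)$. The crucial step is then a computational lemma about the heap model: updating a cell of $h_1$ commutes with composition, i.e.\ $(h_1\ast h_2)[e\mapsto e']=(h_1[e\mapsto e'])\ast h_2$ whenever that cell is in $\mathrm{dom}(h_1)$ and $h_1\ast h_2\neq\bot$. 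Since $h_1[e\mapsto e']$ is precisely the singleton heap witnessing $e\mapsto e'$, and $h_2$ still witnesses $r$ on the unchanged domain, reassembling the split gives $((e\mapsto e')\ast r)\ (s,h[e\mapsto e'])$, as required.

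The main obstacle lies entirely in the data-flow layer, namely the heap-update/composition lemma used for mutation. It requires unfolding the concrete model---heaps as partial functions $\mathbf{nat}\to\mathbf{nat\ option}$, separating conjunction as disjoint union, and the singleton-heap predicate---and carrying out the disjointness bookkeeping, with a little extra care because $e$ and $e'$ are store-dependent expressions. This is exactly the kind of step the paper flags as possibly needing some user interaction; everything on the control-flow side (the triple and lifting unfoldings, transitivity, and the definitional substitution reasoning) is automatic in Isabelle.
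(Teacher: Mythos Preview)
Your proposal is correct and matches what the paper does. The paper itself gives no textual proof of this proposition---it is discharged in the Isabelle formalisation---and your argument is precisely the expected one: unfold the triple $\triple{p}{F}{q}\Leftrightarrow p\le F\ q$ and the lifting $\predT{f}=\lambda q.\ q\cdot f$, then reason pointwise in the concrete store--heap model, with the substitution lemma handling assignment and the disjoint-union bookkeeping (the commutation of heap update with $\ast$ on the relevant cell) handling mutation.
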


The resulting set of inference rules for separation logic allows us to
implement the Isabelle proof tactic \emph{hoare}, which generates all
verification conditions automatically and eliminate the entire control
structure when the invariants of while loops are annotated.

One can use the assignment rule to derive its refinement counterparts:
\begin{align*}
	p \le q[e/`x] &\Rightarrow \spec{p}{q} \sqsubseteq
		(`x := e), \\
	q' \le q[e/`x] &\Rightarrow \spec{p}{q} \sqsubseteq
		\spec{p}{q'};\ (`x := e), \\
	p' \le p[e/`x] &\Rightarrow \spec{p}{q} \sqsubseteq
		(`x := e);\ \spec{p'}{q}.
\end{align*}
The second and third laws are called the \emph{following} and
\emph{leading} refinement law for assignments~\cite{Mor98}. They are
useful for program construction. We have derived analogous laws for
mutation. We have also implemented the tactic \emph{refinement}, which
automatically tries to apply all the rules of this refinement calculus
in construction steps of pointer programs.


\section{Examples}\label{S:examples}

To show our approach at work, we present the obligatory correctness
proof of the classical in situ linked-list reversal algorithm. The
post-hoc verification of this algorithm in Isabelle has been
considered before~\cite{MehtaN05, Weber04}.  However, we follow
Reynolds~\cite{Reynolds02}, who gave an informal annotated proof, and
reconstruct his proof step-by-step in refinement style. As usual for
verification with interactive theorem provers, functional
specifications are related to imperative data structures. The former
are defined recursively and hence amenable to proof by induction.

First, we define two inductive predicates on the heap.  The first one
creates a contiguous heap from a position $e$ using Isabelle's
functional lists as its representation. That is, by induction on the
structure of the list,
\begin{alignat*}{3}
	e &\mapslist \emptylist \ &=& \ \eemp, \\
	e &\mapslist (t\#ts) \ &=& \ (e\ \mapsto\ t) \ast (e + 1 \mapslist ts),
\end{alignat*}
where $\emptylist$ denotes the empty list, $t\#ts$ denotes
concatenation of element $t$ with list $ts$, $e\ \mapsto\ t$ is again
a singleton heap predicate and $\eemp$ states that the heap is empty.

The second predicate indicates whether or not a heap, starting from position $i$,
contains the linked list represented as a functional list:
\begin{alignat*}{3}
	list\ i\ &\emptylist \ & =&\  (i = 0) \wedge \eemp, \\
	list\ i\ &(j\#js) \ &=& \  (i \neq 0) \wedge (\exists k.\ i \mapslist [j,k]\ \ast\ list\ k\ js).
\end{alignat*}
This is  Reynolds' definition, it uses separating conjunction
instead of plain conjunction.

We can now reconstruct Reynolds' proof relative to the standard
recursive function $\mathit{rev}$ for functional list reversal. The
initial specification statement is
\begin{equation*}
\spec{list\ {\isacharbackquote}i\ A_0}
	{list\ {\isacharbackquote}j\ {\isacharparenleft}rev\ A_0{\isacharparenright}},
\end{equation*}
where $A_0$ is the input list and $`i$ and $`j$ are pointers to the head of the list
on the heap.

\begin{figure}
\begin{isabellebody}
\isanewline

{\isasymlbrakk}
	list\ {\isacharbackquote}i\ A$_0${\isacharcomma}\ 
	list\ {\isacharbackquote}j\ {\isacharparenleft}rev\ A$_0${\isacharparenright}
{\isasymrbrakk}
\isanewline\ \ \ {\isasymsqsubseteq}\hfill $(1)$\isanewline
{\isacharbackquote}j\ {\isacharcolon}{\isacharequal}\ {\isadigit{0}}{\isacharsemicolon}\ 
\isanewline
{\isasymlbrakk}
	$\exists$A\ B{\isachardot}\ {\isacharparenleft}list\ {\isacharbackquote}i\ A\ {\isacharasterisk}\ list\ {\isacharbackquote}j\ B\ {\isacharparenright}\ $\wedge$\ {\isacharparenleft}rev\ A$_0${\isacharparenright}\ {\isacharequal}\ {\isacharparenleft}rev\ A{\isacharparenright}\ {\isacharat}\ B{\isacharcomma}\ 
list\ {\isacharbackquote}j\ {\isacharparenleft}rev\ A$_0${\isacharparenright}\ {\isasymrbrakk}\ 
\isanewline\ \ \ {\isasymsqsubseteq}\hfill $(2)$\isanewline
\textcolor{blue}{{\isacharbackquote}j\ {\isacharcolon}{\isacharequal}\ {\isadigit{0}}{\isacharsemicolon}}\isanewline
\textcolor{blue}{$\wwhile$\ {\isacharbackquote}i\ {\isasymnoteq}\ {\isadigit{0}}\ $\ddo$}\isanewline
\ \ {\isasymlbrakk}{\isacharparenleft}$\exists$A\ B{\isachardot}\ {\isacharparenleft}list\ {\isacharbackquote}i\ A\ {\isacharasterisk}\ list\ {\isacharbackquote}j\ B{\isacharparenright}\ $\wedge$\ {\isacharparenleft}rev\ A$_0${\isacharparenright}\ {\isacharequal}\ {\isacharparenleft}rev\ A{\isacharparenright}\ {\isacharat}\ B{\isacharparenright}\ $\wedge$\ {\isacharbackquote}i\ {\isasymnoteq}\ {\isadigit{0}}{\isacharcomma}\ \isanewline
\ \ \ \ $\exists$A\ B{\isachardot}\ {\isacharparenleft}list\ {\isacharbackquote}i\ A\ {\isacharasterisk}\ list\ {\isacharbackquote}j\ B{\isacharparenright}\ $\wedge$\ {\isacharparenleft}rev\ A$_0${\isacharparenright}\ {\isacharequal}\ {\isacharparenleft}rev\ A{\isacharparenright}\ {\isacharat}\ B\ {\isasymrbrakk}\isanewline
\textcolor{blue}{$\ood$}\

\isanewline 

{\isasymlbrakk}{\isacharparenleft}$\exists$A\ B{\isachardot}\ {\isacharparenleft}list\ {\isacharbackquote}i\ A\ {\isacharasterisk}\ list\ {\isacharbackquote}j\ B{\isacharparenright}\ $\wedge$\ {\isacharparenleft}rev\ A$_0${\isacharparenright}\ {\isacharequal}\ {\isacharparenleft}rev\ A{\isacharparenright}\ {\isacharat}\ B{\isacharparenright}\ $\wedge$\ {\isacharbackquote}i\ {\isasymnoteq}\ {\isadigit{0}}{\isacharcomma}\ \isanewline
\ \ $\exists$A\ B{\isachardot}\ {\isacharparenleft}list\ {\isacharbackquote}i\ A\ {\isacharasterisk}\ list\ {\isacharbackquote}j\ B{\isacharparenright}\ $\wedge$\ {\isacharparenleft}rev\ A$_0${\isacharparenright}\ {\isacharequal}\ {\isacharparenleft}rev\ A{\isacharparenright}\ {\isacharat}\ B\ {\isasymrbrakk}
\isanewline\ \ \ {\isasymsqsubseteq}\hfill $(3)$\isanewline
{\isasymlbrakk}{\isacharparenleft}$\exists$a\ A\ B\ k{\isachardot}\ {\isacharparenleft}{\isacharbackquote}i\ {\isacharbrackleft}{\isasymmapsto}{\isacharbrackright}\ {\isacharbrackleft}a{\isacharcomma}k{\isacharbrackright}\ {\isacharasterisk}\ list\ k\ A\ {\isacharasterisk}\ list\ {\isacharbackquote}j\ B{\isacharparenright}\ $\wedge$\ {\isacharparenleft}rev\ A$_0${\isacharparenright}\ {\isacharequal}\ {\isacharparenleft}rev\ {\isacharparenleft}a{\isacharhash}A{\isacharparenright}{\isacharparenright}\ {\isacharat}\ B{\isacharparenright}\ $\wedge$\ {\isacharbackquote}i\ {\isasymnoteq}\ {\isadigit{0}}{\isacharcomma}\ \isanewline
\ \ $\exists$A\ B{\isachardot}\ {\isacharparenleft}list\ {\isacharbackquote}i\ A\ {\isacharasterisk}\ list\ {\isacharbackquote}j\ B{\isacharparenright}\ $\wedge$\ {\isacharparenleft}rev\ A$_0${\isacharparenright}\ {\isacharequal}\ {\isacharparenleft}rev\ A{\isacharparenright}\ {\isacharat}\ B\ {\isasymrbrakk}
\isanewline\ \ \ {\isasymsqsubseteq}\hfill $(4)$\isanewline
{\isacharbackquote}k\ {\isacharcolon}{\isacharequal}\ {\isacharat}{\isacharparenleft}{\isacharbackquote}i\ {\isacharplus}\ {\isadigit{1}}{\isacharparenright}{\isacharsemicolon}\isanewline
{\isasymlbrakk}{\isacharparenleft}$\exists$a\ A\ B{\isachardot}\ {\isacharparenleft}{\isacharbackquote}i\ {\isacharbrackleft}{\isasymmapsto}{\isacharbrackright}\ {\isacharbrackleft}a{\isacharcomma}{\isacharbackquote}k{\isacharbrackright}\ {\isacharasterisk}\ list\ {\isacharbackquote}k\ A\ {\isacharasterisk}\ list\ {\isacharbackquote}j\ B{\isacharparenright}\ $\wedge$\ {\isacharparenleft}rev\ A$_0${\isacharparenright}\ {\isacharequal}\ {\isacharparenleft}rev\ {\isacharparenleft}a{\isacharhash}A{\isacharparenright}{\isacharparenright}\ {\isacharat}\ B{\isacharparenright}\ $\wedge$\ {\isacharbackquote}i\ {\isasymnoteq}\ {\isadigit{0}}{\isacharcomma}\ \isanewline
\ \ $\exists$A\ B{\isachardot}\ {\isacharparenleft}list\ {\isacharbackquote}i\ A\ {\isacharasterisk}\ list\ {\isacharbackquote}j\ B{\isacharparenright}\ $\wedge$\ {\isacharparenleft}rev\ A$_0${\isacharparenright}\ {\isacharequal}\ {\isacharparenleft}rev\ A{\isacharparenright}\ {\isacharat}\ B\ {\isasymrbrakk}
\isanewline\ \ \ {\isasymsqsubseteq}\hfill $(5)$\isanewline
{\isacharbackquote}k\ {\isacharcolon}{\isacharequal}\ {\isacharat}{\isacharparenleft}{\isacharbackquote}i\ {\isacharplus}\ {\isadigit{1}}{\isacharparenright}{\isacharsemicolon}\isanewline
{\isacharat}{\isacharparenleft}{\isacharbackquote}i\ {\isacharplus}\ {\isadigit{1}}{\isacharparenright}\ {\isacharcolon}{\isacharequal}\ {\isacharbackquote}j{\isacharsemicolon}\isanewline
{\isasymlbrakk}{\isacharparenleft}$\exists$a\ A\ B{\isachardot}\ {\isacharparenleft}{\isacharbackquote}i\ {\isacharbrackleft}{\isasymmapsto}{\isacharbrackright}\ {\isacharbrackleft}a{\isacharcomma}{\isacharbackquote}j{\isacharbrackright}\ {\isacharasterisk}\ list\ {\isacharbackquote}k\ A\ {\isacharasterisk}\ list\ {\isacharbackquote}j\ B{\isacharparenright}\ $\wedge$\ {\isacharparenleft}rev\ A$_0${\isacharparenright}\ {\isacharequal}\ {\isacharparenleft}rev\ {\isacharparenleft}a{\isacharhash}A{\isacharparenright}{\isacharparenright}\ {\isacharat}\ B{\isacharparenright}\ $\wedge$\ {\isacharbackquote}i\ {\isasymnoteq}\ {\isadigit{0}}{\isacharcomma}\ \isanewline
\ \ $\exists$A\ B{\isachardot}\ {\isacharparenleft}list\ {\isacharbackquote}i\ A\ {\isacharasterisk}\ list\ {\isacharbackquote}j\ B{\isacharparenright}\ $\wedge$\ {\isacharparenleft}rev\ A$_0${\isacharparenright}\ {\isacharequal}\ {\isacharparenleft}rev\ A{\isacharparenright}\ {\isacharat}\ B\ {\isasymrbrakk}
\isanewline\ \ \ {\isasymsqsubseteq}\hfill $(6)$\isanewline
{\isacharbackquote}k\ {\isacharcolon}{\isacharequal}\ {\isacharat}{\isacharparenleft}{\isacharbackquote}i\ {\isacharplus}\ {\isadigit{1}}{\isacharparenright}{\isacharsemicolon}\isanewline
{\isacharat}{\isacharparenleft}{\isacharbackquote}i\ {\isacharplus}\ {\isadigit{1}}{\isacharparenright}\ {\isacharcolon}{\isacharequal}\ {\isacharbackquote}j{\isacharsemicolon}\isanewline
{\isasymlbrakk}{\isacharparenleft}$\exists$a\ A\ B{\isachardot}\ {\isacharparenleft}list\ {\isacharbackquote}k\ A\ {\isacharasterisk}\ {\isacharbackquote}i\ {\isacharbrackleft}{\isasymmapsto}{\isacharbrackright}\ {\isacharbrackleft}a{\isacharcomma}{\isacharbackquote}j{\isacharbrackright}\ {\isacharasterisk}\ list\ {\isacharbackquote}j\ B{\isacharparenright}\ $\wedge$\ {\isacharparenleft}rev\ A$_0${\isacharparenright}\ {\isacharequal}\ {\isacharparenleft}rev\ {\isacharparenleft}a{\isacharhash}A{\isacharparenright}{\isacharparenright}\ {\isacharat}\ B{\isacharparenright}\ $\wedge$\ {\isacharbackquote}i\ {\isasymnoteq}\ {\isadigit{0}}{\isacharcomma}\ \isanewline
\ \ $\exists$A\ B{\isachardot}\ {\isacharparenleft}list\ {\isacharbackquote}i\ A\ {\isacharasterisk}\ list\ {\isacharbackquote}j\ B{\isacharparenright}\ $\wedge$\ {\isacharparenleft}rev\ A$_0${\isacharparenright}\ {\isacharequal}\ {\isacharparenleft}rev\ A{\isacharparenright}\ {\isacharat}\ B\ {\isasymrbrakk}
\isanewline\ \ \ {\isasymsqsubseteq}\hfill $(7)$\isanewline
{\isacharbackquote}k\ {\isacharcolon}{\isacharequal}\ {\isacharat}{\isacharparenleft}{\isacharbackquote}i\ {\isacharplus}\ {\isadigit{1}}{\isacharparenright}{\isacharsemicolon}\isanewline
{\isacharat}{\isacharparenleft}{\isacharbackquote}i\ {\isacharplus}\ {\isadigit{1}}{\isacharparenright}\ {\isacharcolon}{\isacharequal}\ {\isacharbackquote}j{\isacharsemicolon}\isanewline
{\isasymlbrakk}{\isacharparenleft}$\exists$a\ A\ B{\isachardot}\ {\isacharparenleft}list\ {\isacharbackquote}k\ A\ {\isacharasterisk}\ list\ {\isacharbackquote}i\ {\isacharparenleft}a{\isacharhash}B{\isacharparenright}\ {\isacharparenright}\ $\wedge$\ {\isacharparenleft}rev\ A$_0${\isacharparenright}\ {\isacharequal}\ {\isacharparenleft}rev\ A{\isacharparenright}\ {\isacharat}\ {\isacharparenleft}a{\isacharhash}B{\isacharparenright}\ {\isacharparenright}\ $\wedge$\ {\isacharbackquote}i\ {\isasymnoteq}\ {\isadigit{0}}{\isacharcomma}\ \isanewline
\ \ $\exists$A\ B{\isachardot}\ {\isacharparenleft}list\ {\isacharbackquote}i\ A\ {\isacharasterisk}\ list\ {\isacharbackquote}j\ B{\isacharparenright}\ $\wedge$\ {\isacharparenleft}rev\ A$_0${\isacharparenright}\ {\isacharequal}\ {\isacharparenleft}rev\ A{\isacharparenright}\ {\isacharat}\ B\ {\isasymrbrakk}
\isanewline\ \ \ {\isasymsqsubseteq}\hfill $(8)$\isanewline
{\isacharbackquote}k\ {\isacharcolon}{\isacharequal}\ {\isacharat}{\isacharparenleft}{\isacharbackquote}i\ {\isacharplus}\ {\isadigit{1}}{\isacharparenright}{\isacharsemicolon}\isanewline
{\isacharat}{\isacharparenleft}{\isacharbackquote}i\ {\isacharplus}\ {\isadigit{1}}{\isacharparenright}\ {\isacharcolon}{\isacharequal}\ {\isacharbackquote}j{\isacharsemicolon}\isanewline
{\isasymlbrakk}{\isacharparenleft}$\exists$A\ B{\isachardot}\ {\isacharparenleft}list\ {\isacharbackquote}k\ A\ {\isacharasterisk}\ list\ {\isacharbackquote}i\ B{\isacharparenright}\ $\wedge$\ {\isacharparenleft}rev\ A$_0${\isacharparenright}\ {\isacharequal}\ {\isacharparenleft}rev\ A{\isacharparenright}\ {\isacharat}\ B{\isacharparenright}\ $\wedge$\ {\isacharbackquote}i\ {\isasymnoteq}\ {\isadigit{0}}{\isacharcomma}\ \isanewline
\ \ $\exists$A\ B{\isachardot}\ {\isacharparenleft}list\ {\isacharbackquote}i\ A\ {\isacharasterisk}\ list\ {\isacharbackquote}j\ B{\isacharparenright}\ $\wedge$\ {\isacharparenleft}rev\ A$_0${\isacharparenright}\ {\isacharequal}\ {\isacharparenleft}rev\ A{\isacharparenright}\ {\isacharat}\ B\ {\isasymrbrakk}
\isanewline\ \ \ {\isasymsqsubseteq}\hfill $(9)$\isanewline
{\isacharbackquote}k\ {\isacharcolon}{\isacharequal}\ {\isacharat}{\isacharparenleft}{\isacharbackquote}i\ {\isacharplus}\ {\isadigit{1}}{\isacharparenright}{\isacharsemicolon}\isanewline
{\isacharat}{\isacharparenleft}{\isacharbackquote}i\ {\isacharplus}\ {\isadigit{1}}{\isacharparenright}\ {\isacharcolon}{\isacharequal}\ {\isacharbackquote}j{\isacharsemicolon}\isanewline
{\isacharbackquote}j\ {\isacharcolon}{\isacharequal}\ {\isacharbackquote}i{\isacharsemicolon}\isanewline
{\isasymlbrakk}{\isacharparenleft}$\exists$A\ B{\isachardot}\ {\isacharparenleft}list\ {\isacharbackquote}k\ A\ {\isacharasterisk}\ list\ {\isacharbackquote}j\ B{\isacharparenright}\ $\wedge$\ {\isacharparenleft}rev\ A$_0${\isacharparenright}\ {\isacharequal}\ {\isacharparenleft}rev\ A{\isacharparenright}\ {\isacharat}\ B{\isacharparenright}\ $\wedge$\ {\isacharbackquote}i\ {\isasymnoteq}\ {\isadigit{0}}{\isacharcomma}\ \isanewline
\ \ $\exists$A\ B{\isachardot}\ {\isacharparenleft}list\ {\isacharbackquote}i\ A\ {\isacharasterisk}\ list\ {\isacharbackquote}j\ B{\isacharparenright}\ $\wedge$\ {\isacharparenleft}rev\ A$_0${\isacharparenright}\ {\isacharequal}\ {\isacharparenleft}rev\ A{\isacharparenright}\ {\isacharat}\ B\ {\isasymrbrakk}
\isanewline\ \ \ {\isasymsqsubseteq}\hfill $(10)$\isanewline
\textcolor{blue}{{\isacharbackquote}k\ {\isacharcolon}{\isacharequal}\ {\isacharat}{\isacharparenleft}{\isacharbackquote}i\ {\isacharplus}\ {\isadigit{1}}{\isacharparenright}{\isacharsemicolon}\isanewline
{\isacharat}{\isacharparenleft}{\isacharbackquote}i\ {\isacharplus}\ {\isadigit{1}}{\isacharparenright}\ {\isacharcolon}{\isacharequal}\ {\isacharbackquote}j{\isacharsemicolon}\isanewline
{\isacharbackquote}j\ {\isacharcolon}{\isacharequal}\ {\isacharbackquote}i{\isacharsemicolon}\isanewline
{\isacharbackquote}i\ {\isacharcolon}{\isacharequal}\ {\isacharbackquote}k}
\end{isabellebody}
\caption{\emph{In situ} list reversal by refinement. The first block
shows the refinement up to the introduction of the while-loop. The
second block shows the refinement of the body of that loop.}
\label{fig:ref-reversal}
\end{figure}

The main idea behind Reynolds' proof is to split the heap into two
lists, initially $A_0$ and an empty list, and then iteratively swing
the pointer of the first element of the first list to the second list.
The full proof is shown in Figure \ref{fig:ref-reversal}; we now
explain its details.

In $(1)$, we strengthen the precondition, splitting the heap into two
lists $A$ and $B$, and inserting a variable $`j$ initially assigned to
$0$ (or \emph{null}). The equation $(rev\ A_0) = (rev\ A)\ @\ B$ then
holds of these lists, where $@$ denotes the append operation on linked
lists. Justifying this step in Isabelle requires calling the
\emph{refinement} tactic from Section \ref{S:tool_design}, which
applies the leading law for assignment. This obliges us to prove that
the lists $A$ and $B$ \emph{de facto} exist, which is discharged
automatically by calling Isabelle's \emph{force} tactic.  In fact, $8$
out of the $10$ proof steps in our construction are essentially
automatic: they only require calling \emph{refinement} followed by
Isabelle's \emph{force} or \emph{auto} provers.

The new precondition generated then becomes the loop invariant of the
algorithm. It allows us to refine our specification statement to a
while loop in step $(2)$, where we iterate $`i$ until it becomes $0$.
Calling the \emph{refinement} tactic applies the while law for
refinement.  From step $(3)$ to $(10)$, we refine the inside part of
the while loop and do not display the outer part of the program.

Because now $`i \neq 0$, the list $A$ has at least an element $a$. We
can thus expand the definition of $list$ in step $(3)$. Next, we
assign the value pointed to by $`i + 1$ to $`k$---our first list now
starts at $`k$ and $`i$ points to $[a, `k]$.  Isabelle then struggles
to discharge the generated proof goal automatically.  In this
predicate, the heap is divided in three parts. One needs to prove
first that $`i + 1$ really points to the same value when considering
just the first part of the heap or the entire heap. After that, the
proof is automatic.

Step $(5)$ performs a mutation on the heap, changing the cell $`i + 1$
to $`j$, consequently $`i$ points now to $[a, `j]$.  Because $\ast$ is
commutative, we can strengthen the precondition accordingly in step
$(6)$.  We now work backwards, folding the definition of $list$ in
step $(7)$ and removing the existential of $a$ in step $(8)$. This
step requires again interaction: we need to indicate to Isabelle how
to properly split the heap.  Lastly, to establish the invariant, we
only need to swap the pointers $`j$ to $`i$ and $`i$ to $`k$ in steps
$(9)$ and $(10)$. The resulting algorithm is highlighted in Figure
\ref{fig:ref-reversal}.


Using our tool we have also post-hoc verified this algorithm with
separation logic in two different ways. The first one, previously
taken by Weber~\cite{Weber04}, uses Reynolds' list predicate, as we
have used it in the above refinement proof. The second one follows
Nipkow in using separating conjunction in the pre- and postcondition,
but not in the definition of the list predicate. Since our approach is
modular with respect to the underlying data model, it was
straightforward to replay all the steps of Nipkow's proof in our
setting.  The degree of proof automation with our tool is comparable
for both proofs.

Interestingly, however, none of the list reversal proofs have used the
frame rule or its refinement counterpart. We have therefore tested
this rule separately on a small example, where a verification without
the frame rule would be difficult. The following Isabelle code fragment
shows the Hoare triple used for verification.

\begin{center}
\begin{isabellebody}
{\isachardoublequoteopen}{\isasymturnstile}\{\ x\ {\isacharbrackleft}{\isasymmapsto}{\isacharbrackright}\ {\isacharbrackleft}-{\isacharcomma}\ j{\isacharbrackright}\  \ {\isacharasterisk}\  \ list\ j\ as\  \}\ {\isacharat}x\ {\isacharcolon}{\isacharequal}\ a\ \{ x\ {\isacharbrackleft}{\isasymmapsto}{\isacharbrackright}\ {\isacharbrackleft}a{\isacharcomma}\ j{\isacharbrackright}\  \ {\isacharasterisk}\ \ list\ j\ as\  \} {\isachardoublequoteclose}
\end{isabellebody}
\end{center}
Calling the \emph{hoare} tactic for verification condition generation
was sufficient for proving the correctness of this simple example
automatically.  Internally, the frame and the mutation rule have been
applied.
The Isabelle code for all these proofs is available online.

In sum, our approach supports the program construction and
verification of pointer-based programs with separation logic, but more
case studies need to be performed to assess the performance of our
tool. In the future, a Sledgehammer-style integration of optimised
provers and solvers for the data level seems desirable for increasing
the general degree of automation.


\section{Conclusion}\label{S:conclusion}

A principled approach to the design of program verification and
construction tools for separation logic with the Isabelle theorem
proving environment has been presented. This approach has been used
previously for implementing tools for the construction and
verification of simple while programs~\cite{sefm2014} and rely-guarantee
based concurrent programs~\cite{fm2014}. It aims at a clean separation of
concern between the control flow and the data flow of programs and
focusses on developing a lightweight algebraic layer from which
verification conditions or transformation and refinement laws can be
developed by simple equational reasoning. In the case of while
programs, this layer is provided by Kleene algebras with
tests; in the rely-guarantee case, new algebraic
foundations based on concurrent Kleene algebras were required.

Our approach to separation logic uses a conceptual reconstruction of
separation logic beyond a mere implementation as well, which forms a
contribution in its own right. Though strongly inspired by abstract
separation logic~\cite{COY07} and the logic of bunched
implications~\cite{OHearnP99}, we aim at a different combination of
simplicity and mathematical abstraction. In contrast to the logic of
bunched implication, we use power series instead of categories, and in
contrast to abstract separation logic we use predicate transformers
instead of state transformers. These design choices allow us to use
power series, quantales and generic lifting constructions throughout
the approach, which leads to a small and highly automated Isabelle
implementation. The main contribution of this approach is probably the
view on separating conjunction as a notion of convolution and language
product over resources.

Our tool prototype has so far allowed us to verify some simple
pointer-based program with a relatively high degree of automation. It
is certainly a useful basis for educational and research purposes, but
extension and optimisation beyond the mere proof of concept are
desirable. This includes the consideration of error
states~\cite{COY07} or of the $\mathbf{cons}$ and $\mathbf{dispose}$
operations, the development of more sophisticated proof tactics, and
the integration of tools for automatic data-level reasoning in
Sledgehammer style.

Other opportunities for future work lie in the consolidation with
previous approaches to predicate transformers in
Isabelle~\cite{Preoteasa11}, in a further abstraction of the control
flow layer by defining modal Kleene algebras over assertion
quantales~\cite{MoellerStruth} for which some Isabelle infrastructure
already exists~\cite{GuttmannSW11}, in a combination with our
rely-guarantee tool into RGSep-style tools for concurrency
verification~\cite{Vafeiadis}, and in the exploration of the language
connection of separating conjunction in terms of representability and
decidability results.

\paragraph*{Acknowledgements.}  We are grateful for support by EPSRC
grant EP/J003727/1 and the CNPq. The third author would like to thank
Tony Hoare, Peter O'Hearn and Matthew Parkinson for discussions on
separation logic.


\bibliographystyle{plain}
\bibliography{sepalg}

\end{document}